\DeclareMathAlphabet{\altmathcal}{OMS}{cmsy}{m}{n}
\newtheorem{problem}{Problem}
\newtheorem{lemma}{Lemma}
\newcommand{\method}{\textsc{GraphWeaver}}
\newcommand{\hide}[1]{}
\begin{document}

\title{GraphWeaver: Billion-Scale Cybersecurity Incident Correlation 
}

\author{Scott Freitas}
\email{scottfreitas@microsoft.com}
\orcid{XXXX}
\affiliation{%
  \institution{Microsoft Security Research}
  \city{Redmond}
  \state{Washington}
  \country{USA}
}

\author{Amir Gharib}
\email{agharib@microsoft.com}
\orcid{XXXX}
\affiliation{%
  \institution{Microsoft Security Research}
  \city{Redmond}
  \state{Washington}
  \country{USA}
}

\renewcommand{\shortauthors}{Freitas \& Gharib}

\begin{abstract}
In the dynamic landscape of large enterprise cybersecurity, accurately and efficiently correlating billions of security alerts into comprehensive incidents is a substantial challenge. 
Traditional correlation techniques often struggle with maintenance, scaling, and adapting to emerging threats and novel sources of telemetry.
We introduce \method{}, an industry-scale framework 
that shifts the traditional incident correlation process to a data-optimized, geo-distributed graph based approach.
\method{} introduces a suite of innovations tailored to handle the complexities of correlating billions of shared evidence alerts across hundreds of thousands of enterprises. 
Key among these innovations are a geo-distributed database and PySpark analytics engine for large-scale data processing, a minimum spanning tree algorithm to optimize correlation storage, integration of security domain knowledge and threat intelligence, and a human-in-the-loop feedback system to continuously refine key correlation processes and parameters.
\method{} is integrated into the Microsoft Defender XDR product and deployed worldwide, handling billions of correlations with a 99\% accuracy rate, as confirmed by customer feedback and extensive investigations by security experts.
This integration has not only maintained high correlation accuracy but reduces traditional correlation storage requirements by 7.4x. 
We provide an in-depth overview of the key design and operational features of \method{}, setting a precedent as the first cybersecurity company to openly discuss these critical capabilities at this level of depth.
\end{abstract}

\begin{CCSXML}
<ccs2012>
   <concept>
       <concept_id>10003752.10003809.10003635</concept_id>
       <concept_desc>Theory of computation~Graph algorithms analysis</concept_desc>
       <concept_significance>300</concept_significance>
       </concept>
   <concept>
       <concept_id>10010520.10010521.10010537.10003100</concept_id>
       <concept_desc>Computer systems organization~Cloud computing</concept_desc>
       <concept_significance>300</concept_significance>
       </concept>
   <concept>
       <concept_id>10002978.10002997.10002998</concept_id>
       <concept_desc>Security and privacy~Malware and its mitigation</concept_desc>
       <concept_significance>300</concept_significance>
       </concept>
 </ccs2012>
\end{CCSXML}

\ccsdesc[300]{Theory of computation~Graph algorithms analysis}
\ccsdesc[300]{Computer systems organization~Cloud computing}
\ccsdesc[300]{Security and privacy~Malware and its mitigation}

\keywords{Graph mining, cybersecurity, distributed processing, minimum spanning tree, incident correlation, human-in-the-loop}

\received{20 May 2024}

\maketitle

\section{Introduction}
The exponential growth of threat actors, coupled with the proliferation of cybersecurity solutions aimed at thwarting them, has inundated security operation centers (SOCs) with a flood of alerts~\cite{forrester2020state}. 
Amidst this deluge, discerning genuine threats from the noise presents a formidable challenge. 
In response, alert correlation (Figure~\ref{fig:crown}) has become an indispensable tool in the defender's arsenal, allowing SOCs to consolidate disparate alerts into cohesive incidents, dramatically reducing the number of analyst investigations~\cite{ban2023breaking}.

Intuitively, correlation can be likened to the process of ``weaving'' together security alerts into cohesive incident narratives, grounded in shared indicators of compromise such as malicious files or IP addresses. 
To facilitate this process, Extended Detection and Response (XDR) platforms, such as Microsoft Defender XDR, take on the pivotal role of centralized telemetry hub for the myriad of security products used by organizations. 
One of the primary goals of these XDR platforms is to enhance SOC efficiency and effectiveness by synthesizing disparate alerts across endpoint, identity, email, collaboration tools, cloud services, and data repositories, into cohesive incident graphs that serve as a representation of threat activity occurring in the enterprise.

\begin{figure*}[t!]
    \centering
    \includegraphics[width=\textwidth]{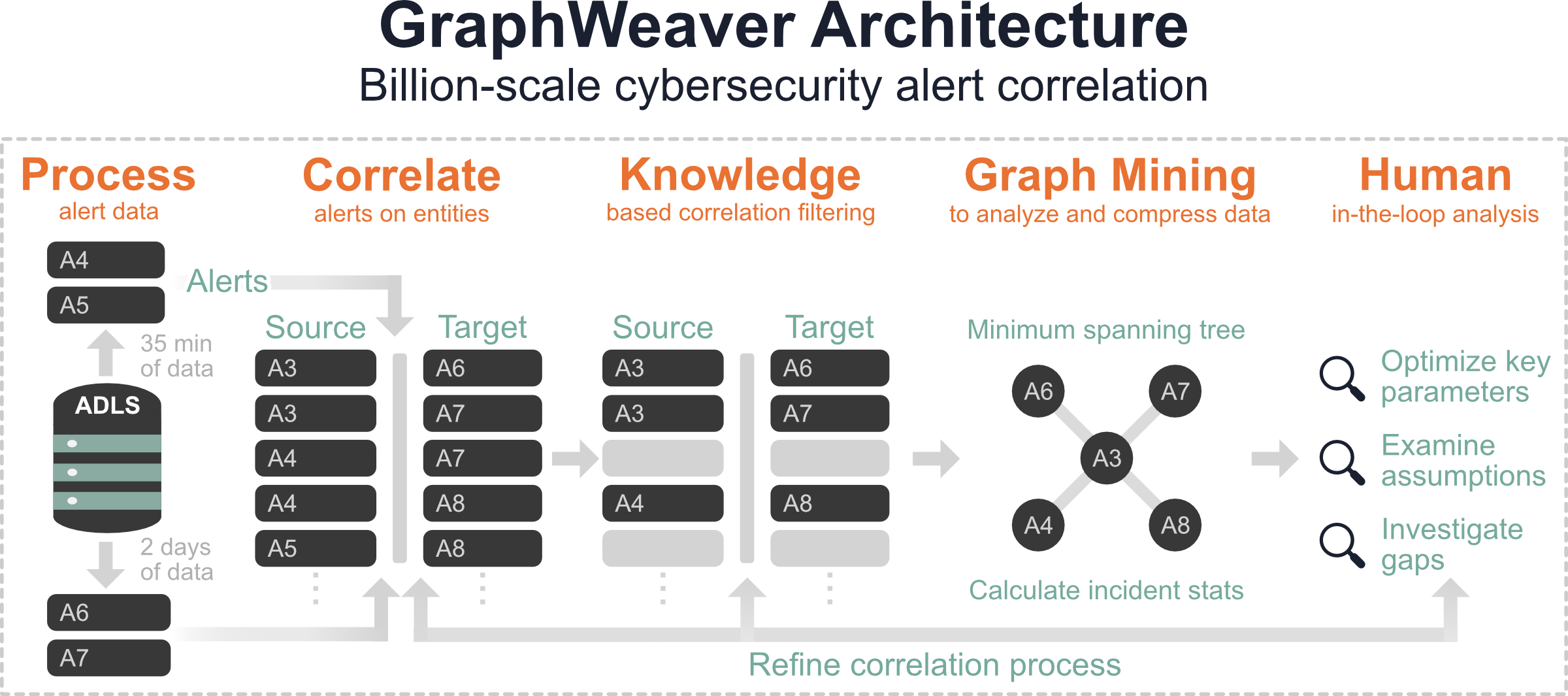}
    \caption{Overview of the \method{} architecture from the perspective of a single geographic region.
    Alert telemetry is retrieved from an ADLS system and processed into two PySpark dataframes: one containing the latest 35 minutes of data, and the other the last 2 days.
    These dataframes are then correlated based on shared entities between pairs of alerts.
    Subsequently, threat intelligence and security domain knowledge are applied to filter out invalid correlations.
    The refined correlations are then centralized to the main node and converted into a graph, enabling a minimum spanning tree algorithm to remove redundant incident correlations.
    Finally, a human-in-the-loop feedback system is employed, where threat researchers review correlation reports to optimize parameters, review assumptions, and pinpoint potential correlation gaps.
    }
    \label{fig:crown}
\end{figure*}

\vspace{1mm}\noindent
\textbf{Incident correlation at scale.}
Scalable and efficient incident correlation presents multiple unique and exciting challenges:

\begin{enumerate}[ topsep=4pt, leftmargin=*, itemsep=3pt]
    \item \textbf{Mitigating false correlations.} False correlations pose a significant risk, potentially leading to unwarranted actions on benign devices or users, disrupting vital company operations. 
    Additionally, over-correlation can result in ``black hole'' incidents, where all alerts within an enterprise begin to correlate indiscriminately.
    
    \item \textbf{Minimizing missed correlations.} Avoiding false negatives is equally important, as a missed correlation could be the difference between the key context required to disrupt a cyberattack, preventing the loss of valuable data and intellectual property.
    
    \item \textbf{Scalability across vast telemetry.} Correlating billions of alerts across a multitude of security products presents a monumental scaling challenge, requiring a robust infrastructure and an efficient methodology.
    Furthermore, these correlations need to happen in near real-time to keep SOCs up to date.
    
    \item \textbf{TI \& domain knowledge.} Correlation across diverse entity types such as IP addresses and files often requires specialized threat intelligence (TI) and domain knowledge to mitigate false positive and false negative correlations.
    
\end{enumerate}

\noindent
The emergence of XDR as a relatively new industry field underscores the timeliness of these challenges, and positions the realm of alert correlation as a pivotal frontier in the field of cybersecurity. 
Going forward, innovative solutions will be imperative to successfully navigate this intricate terrain and safeguard organizations against ever-evolving threats.

\subsection{Contributions}
We introduce \method{} (Fig.~\ref{fig:crown}), a novel framework designed to tackle the challenging task of correlating security incidents at scale, marking the first in-depth academic-industry discourse on this vital endeavor. 
Our framework makes significant contributions to the following areas:

\begin{itemize}[topsep=2mm, itemsep=0mm, parsep=1mm, leftmargin=*]
    
    \item \textbf{\method{}.} 
    The \method{} architecture reshapes cybersecurity incident correlation by introducing a scalable framework capable of correlating alerts at the billion-scale. 
    Integrating domain knowledge and threat intelligence, \method{} ensures entity contextualized correlations to minimize both false and missing correlations.   
    Its efficiency is enhanced by a minimum spanning tree algorithm that compresses the number of correlations required to complete an incident by 7.4x compared to traditional rule-based approaches.
    Additionally, \method{} constantly adapts by mining filtered incident patterns, refining results through a human-in-the-loop feedback system. 
    Most importantly, we reveal key architectural design elements and operational processes, setting a precedent as the first cybersecurity company to openly discuss correlation at this level of depth.

    \item \textbf{Impact to Microsoft Customers and Beyond.} 
    \method{} is integrated into the Microsoft Defender XDR product, a leader in the market~\cite{mellen2024forrester}, which is deployed to hundreds of thousands of organizations worldwide.
    In the realm of XDR, accurately and efficiently correlating alerts into comprehensive incidents stands as one of the most formidable challenges. 
    This research has led to major impact, reshaping the products approach to correlation at scale.

    \item \textbf{Open Research Challenges for the Community.}
    We discuss a spectrum of open research directions that have the potential to substantially impact the cybersecurity correlation landscape. 
    Among these lies the automatic integration of novel sources of threat intelligence and domain knowledge, leveraging large language models (LLMs) and knowledge graphs to enhance correlation efficacy. 
    Additionally, bridging correlation gaps by discerning how indirectly related alerts and incidents, devoid of shared evidence, can correlate through alternative processes like representation learning presents is a key challenge. 
    Moreover, the pursuit of verifying correlation accuracy at scale, whether through LLMs or alternative methodologies, holds the promise of unlocking new insights and mitigating the impact of false correlations. 
    Together, these multifaceted challenges not only push the boundaries of innovative research but contribute to a more secure future.
\end{itemize}

To enhance readability, Table~\ref{table:terminology} details the correlation terminology used in this paper. 
The reader may want to refer to this table for meanings and synonyms of technical terms.

\begin{table*}[th]
\centering
\begin{tabular}{p{0.2\textwidth} p{0.235\textwidth} p{0.5\textwidth}}

\textbf{Technical term} & \textbf{Synonyms} & \textbf{Meaning} \\
\cmidrule(r){1-1} \cmidrule(lr){2-2} \cmidrule(l){3-3}
Alert & Node, vertex & Potential security threat that was detected on key related entities  \\ \addlinespace
Correlation & Edge, link, connection & Connection between two alerts based on a shared entity \\ \addlinespace
Enterprise & Organization, company & Organizations containing an XDR product \\ \addlinespace
Detector & Rule & A security rule, algorithm, or ML model that generates alerts. 
Detectors can be created by SOCs (custom) or provided by default (built-in) \\ \addlinespace
Incident & Graph, subgraph  & Correlated alerts from email, endpoint, cloud, and network layers that reveal comprehensive threat actor activities \\ \addlinespace
Entity & Evidence & An entity is file, IP address, etc. associated with an alert \\ \addlinespace
Time window & Time span & The time span between two alerts  \\ \addlinespace
Max time window & Entity time window & Maximum allowed time between two alerts sharing a particular entity  \\ \addlinespace
Entity Priority & Priority score & Priority score assigned to each edge based on associated entity type \\ \addlinespace
True positive correlation & TP link & Correlation between two alerts is correct \\ \addlinespace
False positive correlation & FP link & Correlation between two alerts is incorrect \\ \addlinespace
True negative correlation & TN link & Correlation between two alerts is correctly ignored (non-existent) \\ \addlinespace
False negative correlation & FN link & Correlation between two alerts is incorrectly ignored (non-existent) \\ \addlinespace
\bottomrule
\end{tabular}
\caption{Incident correlation terminology containing meanings and synonyms of technical terms.}
\vspace{-5mm}
\label{table:terminology}
\end{table*}

\section{Related Work}\label{sec:Related}
Our work is informed by advancements in key related fields: Network Intrusion Detection Systems (NIDS), Security Information and Event Management (SIEM), 
and Extended Detection and Response (XDR). 
Together, these domains form a symbiotic relationship, each building upon the other to enhance the efficacy and scope of incident correlation.
We select a number of academic and industry works relevant to each domain. 
However, before the introduction of \method{}, industry-based solutions were not publicly discussed in depth, hindering an open discussion of research in the correlation field.

\subsection{Network Intrusion Detection Systems}
NIDS specialize in monitoring and analyzing network traffic for indications of malicious activity~\cite{roesch1999snort,kr2010intrusion,freitas2020d2m}. 
These systems operate by inspecting network packets in real-time, employing signature and anomaly detection techniques to discern threats from suspicious behavior. 
Signature-based detection entails matching network traffic against a database of known attack signatures, identifying threats like malware payloads and network exploits~\cite{masdari2020survey}. 
Conversely, anomaly-based detection identifies statistical deviations from normal network behavior, such as unusual traffic patterns or protocol violations, signaling potential security breaches~\cite{gamage2020deep,chou2021survey}.
Following the identification of network alerts and events, NIDS systems correlate relevant telemetry into incidents using a variety of approaches encompassing heuristic~\cite{wu2019alert,elshoush2013intrusion,dain2002fusing,bateni2013using}, rule-based~\cite{kabiri2007rule,alserhani2016alert,cuppens2001managing}, and machine learning techniques~\cite{cheng2021discovering,smith2008using,ramaki2015real,xuewei2014approach,haas2019alert} 
In addition, a variety of surveys can be found on NIDS-based correlation methodologies~\cite{yu2014survey,salah2013model,sadoddin2006alert,mirheidari2013alert}.
This correlation task is relatively straightforward, focusing on a reduced subset of network-centric alerts rather than across the entire enterprise threat landscape.

\subsection{Security Information \& Event Management}
SIEM systems play a pivotal role as centralized platforms for gathering, analyzing, and correlating security telemetry sourced from various platforms within an organization's IT infrastructure~\cite{gartner2024siem,vielberth2020security}.
These solutions collect logs, events, and alerts generated by a myriad of sources including network devices, servers, applications, and security control systems such as firewalls, antivirus software, and intrusion detection~\cite{gonzalez2021security}. 
However, SIEMs are best understood as platforms designed for customization and flexibility, allowing each organization to tailor the environment to suit their unique preferences and requirements. 
While there is limited research into SIEM based alert correlation~\cite{granadillo2016new,kotenko2022systematic}, these environments heavily rely on the expertise and domain knowledge of the organization's security team to craft correlation rules, rather than out-of-the-box support provided by the system itself.

\subsection{Extended Detection and Response}
XDR solutions represent a significant leap forward in the realm of alert correlation, particularly when compared to other cybersecurity fields. 
They offer easy integration with various security products such as network, email, endpoint, and cloud security, enabling visibility and correlation capabilities across an organization's entire digital footprint.
However, cross-domain correlation is a formidable challenge, as XDR systems must efficiently and accurately correlate across diverse enterprise landscapes. 
                                                                               
There are several prominent XDR products that provide alert correlation by applying static rules based on shared entities like User, IP address, and URL \cite{CiscoXDRIncident, paloaltoXDRincident}. 
However, the transparency regarding the methodologies and data used by these companies is limited, with only minimal information made publicly available.
This lack of detailed public disclosure poses a significant challenge for academic research and development in the field. 
We believe that our research will not only advance these technological capabilities but foster greater openness and collaboration within the cybersecurity research community.

\section{Problem Formulation}\label{sec:problems}
A core business challenge presented by XDR is the need to efficiently and accurately correlate alerts across a vast digital infrastructure, spanning hundreds of thousands, or even millions of enterprises. 
This challenge demands a solution that can handle complex data interactions while adhering to stringent operational and regulatory requirements, including:

\begin{itemize}[topsep=4pt, leftmargin=*, itemsep=3pt]
    \item \textbf{Geo-distributed computation and storage.}
    To comply with international data privacy regulations, the system must support geo-distributed computation and storage to ensure data residency and sovereignty requirements.

    \item \textbf{High-frequency batch processing.}
    The system must continuously operate in streaming or batch mode, processing alerts every few minutes. 
    This ensures that alerts are promptly correlated, enabling quick response to security incidents.

    \item \textbf{Scalability for high-volume data.}
    The architecture must be capable of scaling to accommodate tens of millions of alerts per region in each run without degradation in performance. 

    \item \textbf{Low tolerance for incorrect correlations.}
    The system needs to incorporate security domain knowledge and threat intelligence to minimize false positive and false negative correlations.

    \item \textbf{System redundancy and robustness.}
    To ensure continuous operation and mitigate customer outages, robust redundancy mechanisms are needed to handle potential job failures without dropping alert correlations.

    \item \textbf{Parameter optimization and pattern mining.}
    Optimal parameter settings for correlation time windows must be determined for each entity type to maximize the effectiveness of alert correlations. 
    Additionally, the system should employ pattern mining techniques on the correlations to identify and raise potential incident gaps for resolution. 
\end{itemize}

\noindent
By viewing these requirements through the lens of efficient graph correlation, we formulate two research tasks: 
(1) scalable incident graph mining from a dataframe of raw alert telemetry;
and (2) parameter optimization and correlation gap discovery by mining potentially missed correlations. 
We formally define each problem below and present our solutions in Section~\ref{sec:methodology}.

\begin{problem}\label{problem:1}\textbf{Scalable Incident Graph Mining}

\begin{description}[topsep=1mm, itemsep=0mm, parsep=1mm, leftmargin=6mm, itemindent=0mm]

\item [Given.] A dataframe \textbf{T} containing alert telemetry.

\item [Find.] A correlation dataframe $\bm{C}$ satisfying six constraints: 
    \begin{enumerate}
    \item Each correlation is \textit{valid} based on the time difference between two alerts and the shared entity's maximum time period.

    \item Each correlation is \textit{valid} based on the latest threat intelligence. 

    \item Cross-detector correlation pass key safety checks.
    
    \item Multiple correlations between alerts are resolved such that only the highest prioritized one is retained.
    
    \item Only the minimum set of correlations needed to create an incident are retained.
    
    \end{enumerate}
\end{description}
\end{problem}

\smallskip

\begin{problem}\label{problem:2}\textbf{Parameter Optimization \& Gap Discovery}

\begin{description}[topsep=1mm, itemsep=0mm, parsep=1mm, leftmargin=6mm, itemindent=0mm]

    \item [Given.] A dataframe \textbf{T} containing alert telemetry.
    
    \item [Find.] Optimized correlation time windows for each entity type, and potential incident correlation gaps.
\end{description}
\end{problem}

\medskip\noindent
\textbf{Assumptions.} 
We establish the following assumptions during our analysis.
(1) \textit{Alert classification independence}, where individual alert grades (e.g., TP, FP) are not considered in the correlation process.
We observe that a well-constructed correlation architecture inherently associates TP and FP alerts.
(2) \textit{Homogeneous graph model} consisting exclusively of alert nodes. 
Entities are included as edge metadata rather than as separate nodes. 
Although entities can be modeled as nodes to facilitate indirect alert correlations, our findings suggest that direct alert correlation is more efficient.
(3) \textit{Non-directional edges} independent of alert generation time, as the focus is only on establishing the existence of relationships between alerts.

\section{\method{} Architecture}\label{sec:methodology}
\method{}'s architecture and critical design choices are detailed in eleven steps (S1-S11) across three main sections.
In Section~\ref{subsec:correlation}, we present our methodology for alert correlation at-scale using Azure Data Lake Storage Gen2 (ADLS) and an Azure Synapse PySpark engine.
Section~\ref{subsec:mining-incident-subgraphs} introduces the process of creating incident graphs from alert correlations.
Finally, Section~\ref{subsec:parameter-optimization} discusses our strategy for mining correlation patterns, allowing us to optimize key \method{} parameters and close potential correlation gaps.
While our analysis concentrates on the architecture from the perspective of a single region, it is important to note that the architecture and methodology presented can be uniformly deployed to any number of regions.
See Algorithm~\ref{alg:graphweaver} for an overview of the entire correlation process.

\subsection{Scalable Alert Correlation}\label{subsec:correlation}
We establish the foundation for addressing the first four constraints of Problem~\ref{problem:1} by processing the raw alert telemetry through a six-step process:
(S1) collect alert telemetry from the ADLS system,
(S2) correlate alerts, 
(S3) filter invalid correlations,
(S4) remove low fidelity correlations by leveraging threat intelligence,
(S5) exclude correlations that could potentially cause massive incidents (``black holes''),
and (S6) remove redundant correlations, keeping ones with higher entity prioritization. 
All correlation parameters presented in this section were established through extensive collaboration with threat researchers at Microsoft, aimed at minimizing the risk of false correlations.

\medskip\noindent
\textbf{S1---Collect telemetry.}
We collect 72 hours of historical alert telemetry into a PySpark dataframe called \textit{target alerts} $\bm{T}$. 
This collection period is strategically chosen based on the maximum entity correlation time window specified in Table~\ref{table:entity-info}.
Next, the most recent 35 minutes of telemetry is copied from the target alerts dataframe into a second dataframe called \textit{source alerts} $\bm{S}$.
Each row in these dataframes includes columns for unique organizational and alert identifiers, along with the 17 entities listed in Table~\ref{table:entity-info}.

\medskip\noindent
\textbf{S2---Correlate alerts.}
We transform the source and target alert dataframes into correlations by iteratively joining them across all 17 entity types. 
Three primary constraints guide this process---(i) alerts must share an entity to correlate, (ii) originate from the same organization, and (iii) no self-correlations. 
This approach enables us to generate correlations for every pair of alerts that share common entities, allowing multiple correlations between alerts with several shared entity types. 
Next, the 17 individual dataframes generated from these joins are merged into a single dataframe $\bm{C}$ containing all potential alert correlations. 
Each row in this unified dataframe includes a source alert, target alert, organization identifier, and columns for all 17 entities.
We then refine the correlations by merging it with another ADLS table containing existing correlations, allowing us to remove redundancies.

\medskip\noindent
\textbf{S3---Filter invalid correlations.} 
We filter the correlation dataframe to remove invalid links based on the elapsed time between alerts and the maximum allowed by entity time windows (see Table~\ref{table:entity-info}).
Specifically, a correlation between two alerts $u$ and $v$ is valid if the time difference $\Delta t$ between them does not exceed the max correlation time $t_m$ of the shared entity, which can be denoted as $|\Delta c_{uv}| \leq t_m$.
This ensures adherence to the first constraint of Problem~\ref{problem:1}
Maximum time windows for each entity were initially set based on threat researcher domain knowledge, and subsequently optimized using the pattern mining approach described in Section~\ref{subsec:parameter-optimization}.
The final optimized values are reported in Table~\ref{table:entity-info}.

\begin{table}[t]
\centering
 \begin{tabular}{l l r r} 
 \toprule
 \textbf{Entity} & \textbf{Description} & \textbf{Priority} & \textbf{Time} \\
 \midrule
 SessionId & Cloud session id & (high) 1 & 48h \\
 EmailId & Email message id & 2 & 48h \\
 CampaignId & Email campaign id & 3 & 72h \\
 EmailCluster & Email cluster id & 4 & 72h \\
 UserId & User account id & 5 & 24h \\
 URL & Website URL and domain & 6 & 48h \\
 DeviceId & Identifier for device & 7 & 24h \\
 SHA1 & Cryptographic file hash & 8 & 24h \\
 FileName &  Name of a file & 9 & 24h \\
 AppId & Identifier for cloud app & 10 & 48h \\
 EmailAddress & Email sender address & 11 & 12h \\
 EmailSubject & Email subject & 12 & 12h \\
 RegistryKey & OS registry key & 14 & 24h \\
 RegistryValue & Data stored in key  & 13 & 24h \\
 ResourceId & Cloud resource id & 15 & 24h \\
 IP & IP address & 16 & 8h \\
 IPRange & IP addresses in subnet /24 & (low) 17 & 8h \\
 
\bottomrule
\end{tabular}
\caption{This table contains a description, priority score, and maximum correlation time window for 17 entity types.
The priority scores and correlation time windows for each entity are determined in conjunction with domain experts.
}
\vspace*{-5mm}
\label{table:entity-info}
\end{table}

\medskip\noindent
\textbf{S4---Integrate threat intelligence.} 
We enhance our correlation process by integrating threat intelligence (TI) to more accurately identify valid correlations for specific entity types such as SHA1, FileName, and IPRange. 
Given the lower fidelity nature of IPRange data, due to frequent updates and the complexity in determining whether an IPRange indicates malicious activity, a VPN setup, or simply abnormal behavior, it is crucial to incorporate TI. 
To prevent low quality correlations on benign entities, we use Microsoft Defender Threat Intelligence to keep only those correlations involving SHA1 hashes and FileNames linked to known malicious signatures. 
Likewise, correlations for IPRanges are only kept if they have been confirmed as malicious within the last 48 hours.

\medskip\noindent
\textbf{S5---Prevent black hole correlations.} 
A critical capability is safely correlating alerts generated by built-in XDR detectors, with custom detectors developed by SOC analysts.
This is especially important since nearly half of all alerts originate from custom detectors. 
While this may seem conceptually straightforward, it demands careful analysis to ensure safe correlation across the boundary of built-in and custom alerts. 
Built-in rules are known for their high fidelity detection requirements, whereas custom detections often lack standardized quality controls. 
For instance, it is not uncommon for custom detections by an enterprise to yield thousands of daily alerts---overwhelmingly noisy and of low value to SOC analysts tasked with their review.

Allowing correlations between high-volume, low-fidelity custom detectors and high-fidelity built-in detectors can lead to the formation of unmanageable ``black hole'' incidents that obscure essential alerts from the built-in detectors. 
Conversely, requiring SOC analysts to examine thousands of individual noisy alerts is not practical. 
Moreover, completely avoiding correlations between custom and built-in rules could overlook critical information pertinent to an incident. 
Therefore, it is essential to strategically enable cross detector correlations between high-fidelity custom and built-in detectors, while also allowing for the correlation of noisy alerts originating from the same rule.
To manage this delicate balance, we developed three essential safety checks to enable cross-detector correlation:

\begin{enumerate}[topsep=4pt, leftmargin=*, itemsep=3pt]
    \item \textbf{Low volume detector.} We examine the historical alert volume for each detector. 
    Cross-detector correlation is activated only if the detector’s average alert volume ($l_d$) is below 6\% of the total alerts, and fewer than 20 alerts per day across the enterprise.
    A detector generating more than 20 alerts per day across an enterprise is considered low fidelity, or noisy.

    \item \textbf{Low evidence detector.} The average number of distinct values per entity type ($l_{avg}$) in a detector should not exceed predetermined thresholds. 
    For most entity types, this threshold is set at 4. However, for entity types such as SHA1, FileName, Url, EmailId, and AppId, a higher threshold of 10 is allowed. 
    These limits are set based on threat researcher expertise help prevent ``black hole'' incidents that indiscriminately correlate unrelated alerts.
    
    \item \textbf{Low evidence alert.} Similarly, the number of distinct entity type values ($l_a$) associated with a single alert must be constrained to the same thresholds set above. 
    Since organizations can attach an arbitrary number and type of entities to detector outputs, alerts may occasionally contain dozens, if not hundreds, of entities.
\end{enumerate}

In practice, we develop a secondary PySpark job that profiles all custom and built-in detectors on an hourly basis, storing the resulting telemetry in an ADLS table to confirm their suitability for cross-detector correlation. 
This setup enables the \method{} job to dynamically access these detector profiles and make correlation decisions in real time.

\medskip\noindent
\textbf{S6---Prioritize duplicate correlations.}
At this stage of the process, numerous duplicate correlations still exist between alerts with multiple shared entities. 
The final correlation step involves selecting the most critical link between alerts based on their entity prioritization score (see Table~\ref{table:entity-info}), and filtering out the remaining ones.
The criteria for correlation link prioritization is established through partnership with threat researchers at Microsoft.

\subsection{Incident Graph Mining}\label{subsec:mining-incident-subgraphs}
We transform the correlations from the previous section into incident graphs, allowing us to address the final constraint of Problem~\ref{problem:1} through a three-step process: 
(S7) transform the correlation dataframe into millions of incident graphs,
(S8) remove redundant correlation links through a minimum spanning tree (MST) algorithm,
and (S9) mine incident graph statistics to monitor the efficacy and efficiency of the correlation process.

\medskip\noindent
\textbf{S7---Graph construction.}
After completing the computationally demanding distributed PySpark calculations, we transfer the resulting alert correlations dataframe into a Pandas dataframe for additional processing on the central PySpark node. 
This step of data conversion and centralization facilitates the construction of a simple\footnote{defined as having only one edge between any two alerts and no self-correlating edges} graph in NetworkX. 
Each node in the graph represents an alert, containing metadata such as the time of occurrence and an organization identifier, with edges characterized by the correlation entity type.

\medskip\noindent
\textbf{S8---Compress graph links.}
Our incident graph, now containing millions of subgraphs, is optimized to retain only the essential edges necessary to connect the alerts within each incident. 
For example, in an incident involving alerts $A$, $B$, and $C$ with the correlations $A \rightarrow B$, $B \rightarrow C$, and $A \rightarrow C$, we can eliminate redundant correlations like $A \rightarrow C$ or $B \rightarrow C$ to streamline the graph. 
To accomplish this compression, we run a minimum spanning tree (MST) algorithm across the entire incident graph. 
This approach ensures that we retain only the theoretical minimum number of edges required to connect each incident subgraph. 
Resolving this task addresses the final constraint of Problem~\ref{problem:1}, enhancing our ability to efficiently store correlations, and reduce the computational costs of downstream processes utilizing the telemetry.

\medskip\noindent
\textbf{S9---Mine graph statistics.}
With our incident graphs now complete, we can extract detailed statistics that provide insights into various aspects of the correlation process.
These statistics can include the number of correlations per entity type, correlations segmented by region, correlations categorized by product and detector type, the distribution of incident sizes, the average runtime of correlation processes per region, and the success and failure rates of correlation jobs.
Collecting and analyzing these and other statistics serves multiple purposes. 
First, they offer a comprehensive view of the operational health of our correlation jobs by highlighting potential bottlenecks. 
Additionally, these metrics enable targeted monitoring, allowing us to identify trends, anomalies, and potential areas requiring intervention or optimization.

\begin{algorithm}[!t]
\KwIn{Alert dataframe $\bm{T}$, TI entity list $N$; functions: TI lookup $L$, entity prioritization $S$, alert volume $V$, avg entity volume $A$, number of entities $E$,  detector volume $V$; and threshold limits: $t_m$, $l_d$, $l_{avg}$, and $l_a$}
\KwOut{A refined correlation dataframe $C$} 
    \BlankLine
    let $\bm{R} = \emptyset$ \tcp*{initialize temp dataframe} 
    let $\bm{S} = \{ s \in \bm{T} \mid \text{now()} - s_{time} \leq 35 \text{ minutes} \}$ \tcp*{S1} 
    \BlankLine
    \For{entity $e \in E$} {
        $\bm{R} = \bm{R} \cup \bm{S}.\text{join}(\bm{T}, \text{on} = e, \text{how} = \text{inner})$ \tcp*{S2} 
    } 
    \BlankLine
    $\bm{C} = \{c \in \bm{R} \mid |c_{time_1} - c_{time_2}| \leq t_m \}$ \tcp*{S3} 
    $\bm{C} = \{c \in \bm{C} \mid c_e \in N \wedge L(c) = \text{malicious}\}$ \tcp*{S4} 
    $\bm{C} = \{c \in \bm{C} \mid V(c) \leq l_d \wedge A(c) \leq l_{avg} \wedge E(c)\leq l_a \}$ \tcp*{S5} 
    $\bm{C}$ = \text{$\bm{C}$[$\bm{C}$.GroupBy($\bm{C}_{src}$, $\bm{C}_{tar}$)[$\bm{C}_{score}$].argmin()]} \tcp*{S6} 
    $\bm{C} = \text{DataFrame}(\text{MST}(\text{Graph}(\bm{C})))$ \tcp*{S7, S8} 
    \BlankLine
    $\bm{F} = \bm{R} \setminus \bm{C}$ \tcp*{filtered correlations}
    $t_m \gets \text{Optimize}(\text{Feedback}(\text{Stats}(\bm{C}, \bm{F})))$ \tcp*{S9, S10} 
    $\text{Assumptions} \gets \text{Feedback}(\text{Analyze}(\bm{C}, \bm{F}))$ \tcp*{S11} 
 \caption{\method{}}
 \label{alg:graphweaver}
\end{algorithm}

\subsection{Parameter Optimization \& Gap Discovery}\label{subsec:parameter-optimization}
We address Problem~\ref{problem:2} through two steps:
(S10) optimizes correlation time windows by analyzing both valid and rejected correlations, 
and (S11) pinpoints potential correlation gaps through analysis of filtered correlations and engaging in continuous feedback with our threat research team.
This human-in-the-loop feedback system ensures that \method's correlation strategies are not only precise, but robust against evolving security challenges.

\medskip\noindent
\textbf{S10---Time window optimization.}
We optimize the correlation time window for each entity in Table~\ref{table:entity-info} by identifying potential correlation gaps. 
This process begins by analyzing both valid and filtered correlations from the output of S2 in Section~\ref{subsec:correlation}. 
We calculate key statistical measures such as the average, median, and percentiles for the correlation times of valid and invalid correlations, as well as their combined correlations, on a per-entity basis.
These statistics are then forwarded to our threat research team for a detailed investigation.
This enables them to assess whether an extension of the correlation time window for specific entities could reduce false negatives, or conversely, if a reduction is necessary to decrease false positives. 
By continuously refining these time windows based on empirical data and expert insights, we are able to optimize \method's correlation process.

\medskip\noindent
\textbf{S11---Correlation gap discovery.} 
Unoptimized time windows are not the only contributor to correlation gaps. 
Gaps can also arise from missing threat intelligence, or shifting telemetry as new products and detectors challenge existing assumptions.
To address this, we analyze rejected correlations from the processes outlined in Section~\ref{subsec:correlation} (S3-S5) to identify the most prevalent potential correlation gaps across different detectors and entity types.
The findings are then forwarded to our threat research team, allowing them to assess the need for new TI feeds, revising our correlation assumptions, or adjusting various correlation parameters to maintain and enhance the accuracy and relevance of the system.

\subsection{Computational Complexity}

\begin{lemma}
The time and space complexity of \method{} is $O(s + t + c \log c)$ and $O(sm + tm + cm)$, respectively.
\end{lemma}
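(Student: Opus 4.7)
The plan is to trace Algorithm~\ref{alg:graphweaver} step by step, assigning each line a time and space cost and then summing them. Let $s=|\bm{S}|$, $t=|\bm{T}|$, and $c=|\bm{C}|$ denote the source-alert, target-alert, and correlation counts, and let $m$ be the (fixed) number of entity columns listed in Table~\ref{table:entity-info}. I assume the standard RAM model, so that each hash-join key comparison and dictionary lookup takes $O(1)$ expected time.

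For the time bound, I would first observe that S1 is a linear scan producing $\bm{S}$ in $O(s+t)$. S2 runs $m$ hash joins of $\bm{S}$ against $\bm{T}$; each join takes expected $O(s+t+c_e)$ where $c_e$ is the number of pairs sharing entity $e$, so the cumulative cost is $O(m(s+t)+c)$, which since $m$ is a small constant collapses to $O(s+t+c)$. Steps S3--S5 are pointwise filters and TI/profile lookups over the correlation frame, each $O(c)$. S6 reduces duplicate correlations via a group-by-argmin on alert-pair keys, which I model as a comparison sort in $O(c\log c)$. S7 materializes a NetworkX graph in $O(c)$, and S8's Kruskal-style MST over its $c$ edges runs in $O(c\log c)$. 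S9--S11 compute aggregate statistics in $O(c)$. Adding these yields $O(s+t+c\log c)$, with the log factor coming from the MST and the prioritization sort.

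For the space bound, each row of $\bm{S}$, $\bm{T}$, or $\bm{C}$ (and the intermediates $\bm{R}$ and $\bm{F}$) carries the $m$ entity columns of Table~\ref{table:entity-info} plus $O(1)$ identifier and timestamp metadata, giving $O(sm)$, $O(tm)$, and $O(cm)$ respectively. The constructed graph and its MST contribute only $O(s+t+c)$ additional nodes and edges, which is absorbed into $O(sm+tm+cm)$.

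The main obstacle is giving an output-sensitive bound on the join size: two alerts can in principle match on any of the $m$ entities, producing up to $m$ intermediate tuples per pair before S6 deduplicates, and the pessimistic cross product of $\bm{S}$ and $\bm{T}$ is $\Theta(st)$. I would handle this by taking $c$ to be the row count of the post-join union $\bm{R}$ rather than the final deduplicated correlation count, so that all subsequent $O(c)$ and $O(c\log c)$ passes are automatically bounded by the actual pipeline size; the MST term $O(c\log c)$ then dominates the $O(s+t+c)$ join cost. A secondary subtlety is the hash-join constant-time key assumption, which I would flag but not re-derive.
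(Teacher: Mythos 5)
Your proposal is correct and follows essentially the same route as the paper: a per-step cost accounting in which the joins contribute the linear terms and the MST contributes the $O(c \log c)$ term, with the three dense dataframes giving $O(sm + tm + cm)$ space. If anything you are more careful than the paper's own proof, which asserts each join costs $O(s+t)$ without addressing the output-sensitive join size that you resolve by letting $c$ denote the post-join row count, and which omits the $O(c \log c)$ group-by-argmin cost of S6 that you correctly fold into the same dominant term.
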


\begin{proof}
The time complexity of \method{} is dominated by two key operations: the iterative joins in step S2 and the minimum spanning tree (MST) algorithm in step S8. 
Each PySpark join operation has a time complexity of $O(s + t)$, where $s$ and $t$ represent the number of rows in the source $\bm{S}$ and target $\bm{T}$ dataframes, respectively. 
Despite the join being executed for each of the 17 entities, the aggregate time complexity for the joins remains $O(s + t)$. 
The MST algorithm can be implemented with a time complexity of $O(c \log c)$, where $c$ is the number of edges in the graph---which corresponds to the number of rows in the correlation dataframe $\bm{C}$. 
When combining these complexities, the total time complexity of the method is $O(s + t + c \log c)$.

The space complexity of \method{} is determined by the memory requirements of the three primary PySpark dataframes, $\bm{S}$, $\bm{T}$, and $\bm{C}$. 
These dataframes are stored in dense format, and occupy space proportional to $O(sm)$, $O(tm)$, and $O(cm)$ respectively, where $s$, $t$, and $c$ indicate the number of rows and $m$ represents the number of columns in each dataframe. 
As a result, the overall space complexity is $O(sm + tm + cm)$.
\end{proof}

\section{Deployment and Impact}\label{sec:deployment}
\textbf{Deployment.} 
\method{} has been deployed worldwide, serving hundreds of thousands of Microsoft Defender XDR customers over the last few months. 
The deployment infrastructure utilizes a Synapse-based PySpark cluster tailored to each geographical region (e.g., Europe, Asia). 
This setup includes: 
(a) an ADLS database that guarantees both accessibility and secure management of alert telemetry;
(b) an Azure Synapse backend that provides a robust framework for deployment and monitoring;
and (c) an XXL PySpark pool with 60 executors, each equipped with 64 CPU cores and 400GB of RAM.
To ensure system efficiency, Synapse is configured to autoscale the number of executors based on fluctuating job demand.
Additionally, \method{} runs every few minutes in each region to maintain optimal correlation coverage. 
To enhance system reliability and prevent potential data, Synapse automatically reruns any failed jobs.

\smallskip\noindent
\textbf{Impact.} 
Microsoft Defender XDR processes billions of correlations each month, with \method{} accounting for a significant portion of them. 
A key metric in the XDR domain is the singleton incident ratio---the percentage of incidents containing only a single alert. 
\method{} has successfully reduced the product's overall singleton incident rate by 7\%, translating into millions of investigation hours saved annually by security operation centers.
Customer feedback and detailed review of thousands of security incident correlations by our threat research team demonstrate that \method{} achieves a 99\% true correlation rate. 
Furthermore, the MST algorithm attains a 7.4x compression ratio on the number of correlations generated by our traditional correlation techniques, yielding substantial savings in terms of storage and downstream computational tasks.

\section{Open Research Directions}\label{sec:research-directions}
The task of correlating vast quantities of security alerts presents several compelling research directions for future exploration. 
Each of these directions offers a significant opportunity to advance our understanding and capabilities in security alert management:

\vspace{1mm}\noindent
\textbf{Correlation verification at scale.}
With billions of correlations generated each month, it becomes infeasible to manually review more than a tiny fraction of them. 
Developing robust systems capable of verifying the accuracy of these correlations at scale is a key challenge. 
The integration of advanced techniques, such as large language models, could play a crucial role in achieving scalable verification. 
This will not only enhance the reliability of correlation processes but streamline security operations by reducing manual investigations.
The challenge here lies in creating a system that can efficiently process vast data sets while maintaining high precision.

\vspace{1mm}\noindent
\textbf{Enhanced decision making.}
Enhancing the decision-making process in correlations by incorporating security knowledge graphs~\cite{wang2017kgbiac, levshun2023survey} and sophisticated computational models, such as large language models~\cite{kuang2024knowledge}, 
is another promising research avenue. 
This research can significantly elevate the precision and relevance of correlations, especially for lower fidelity entity types like IP Ranges. 
The focus would be on developing systems that not only correlate data more accurately but also contextualize the significance of key correlations in real-world scenarios.

\vspace{1mm}\noindent
\textbf{Advanced gap detection.}
There is a critical need for the development of advanced correlation systems that are capable of identifying gaps in the connections between security alerts, where direct evidence is lacking. 
Techniques such as representation learning could be pivotal in mapping these gaps within a latent space, thereby uncovering hidden patterns and connections between seemingly unrelated alerts and incidents~\cite{levshun2023survey, chen2020identifying, chen2021graph, jin2020spectral}.

\section{Conclusion}\label{sec:conclusion}
\method{} represents a groundbreaking shift in the landscape of large enterprise cybersecurity, marking the first time a cybersecurity company has openly discussed an industry-scale correlation framework in depth. 
Deployed globally to hundreds of thousands of customers as part of Microsoft Defender XDR, we are correlating billions of security alerts into coherent incidents.
\method{} introduces multiple novel ideas to the correlation space, including the use of a geo-distributed database and PySpark correlation engine, an MST algorithm for optimizing storage and downstream computational tasks, and a human-in-the-loop feedback system that enables continuous refinement of key correlation processes and parameters.
\method{} has decreased the product's singleton incident rate by 7\%, equating to millions of saved SOC investigation hours each year. 
This workload reduction is supported by a 99\% true correlation rate based on customer feedback and an extensive manual evaluation of thousands of correlations by our security team.
Our hope is that this research will not only advance correlation capabilities, but foster 
greater openness and collaboration within the community.
\begin{acks}
We thank all of our colleagues who supported this research.
\end{acks}

\bibliographystyle{ACM-Reference-Format}
\bibliography{main}

\end{document}